\newcommand{\uu}{\boldsymbol{u}}
\newcommand{\vv}{\boldsymbol{v}}
\newcommand{\size}[1]{|| #1 ||}
\newcommand{\defemph}[1]{\textbf{\textsl{#1}}}
\begin{document}
	\title{Ultimate Positivity is Decidable for Simple Linear
          Recurrence Sequences\thanks{The full version of this paper 
is available as~\cite{OuaknineW13b}.}}

	\author{Jo\"el Ouaknine \and James Worrell}
\institute{Department of Computer Science, University of Oxford, UK}
	
	\maketitle
	
\begin{abstract}          
We consider the decidability and complexity of the Ultimate Positivity
Problem, which asks whether all but finitely many terms of a given
rational linear recurrence sequence (LRS) are positive.  Using lower
bounds in Diophantine approximation concerning sums of $S$-units, we
show that for simple LRS (those whose characteristic polynomial has no
repeated roots) the Ultimate Positivity Problem is decidable in
polynomial space.  If we restrict to simple LRS of a fixed order then
we obtain a polynomial-time decision procedure.  As a complexity lower
bound we show that Ultimate Positivity for simple LRS is at least as
hard as the decision problem for the universal theory of the reals:
a problem that is known to lie between  \textbf{coNP} and
\textbf{PSPACE}.
\end{abstract}
	
\section{Introduction}
A \defemph{linear recurrence sequence (LRS)} is an infinite
sequence $\uu = \langle u_0, u_1, \ldots \rangle$ of rational
numbers satisfying a recurrence relation
\begin{equation}
\label{rec-rel}
u_{n+k} = a_1 u_{n+k-1} + a_2 u_{n+k-2} + \ldots + a_k u_{n} 
\end{equation} 
for all $n\geq 0$, where $a_1, a_2, \ldots, a_k$ are fixed rational
numbers with $a_k\neq 0$.  Such a sequence is determined by its
initial values $u_0, \ldots, u_{k-1}$ and the recurrence relation.
We say that the recurrence has \defemph{characteristic polynomial}
\[f(x) = x^k - a_1 x^{k-1} - \ldots - a_{k-1} x - a_k \, .\]
The least $k$ such that $\uu$ satisfies a recurrence of the form
(\ref{rec-rel}) is called the \defemph{order} of $\uu$.  If the
characteristic polynomial of this (unique) recurrence has no repeated
roots then we say that $\uu$ is \defemph{simple}. 

Given an LRS $\uu$ there are polynomials $p_1,\ldots,p_k
\in \mathbb{C}[x]$ such that
\[ u_n = p_1(n)\gamma_1^n + \ldots + p_k(n) \gamma_k^n \, , \]
where $\gamma_1,\ldots,\gamma_k$ are the roots of the characteristic
polynomial.  Moreover $\uu$ is simple if and only if it admits such a
representation in which each polynomial $p_i$ is a constant.  Simple
LRS are a natural and widely studied subclass of LRS whose analysis
nevertheless remains extremely
challenging~\cite{AV09,BOOK,ESS02,OW13:constructive-positivity}.

Motivated by questions in language theory and formal power series,
Rozenberg, Salomaa, and Soittola~\cite{RS94,Soi76} highlight the
following four decision problems concerning LRS\@.  Given an LRS
$\langle u_n \rangle_{n=0}^\infty$ (represented by a linear recurrence
and sequence of initial values):
\begin{enumerate}
\item Does $u_n=0$ for some $n$?
\item Does $u_n=0$ for infinitely many $n$?
\item Is $u_n \geq 0$ for all $n$?
\item Is $u_n \geq 0$ for all but finitely many $n$?
\end{enumerate}

Linear recurrence sequences are ubiquitous in mathematics and computer
science, and the above four problems (and assorted variants) arise in
a variety of settings; see~\cite{OW14:SODA} for references.  For
example, an LRS modelling population size is biologically meaningful
only if it never becomes negative.

Problem~1 is known as \defemph{Skolem's Problem}, after the
Skolem-Mahler-Lech Theorem~\cite{Lec53,Mah35,Sko34}, which
characterises the set $\{ n \in \mathbb{N} : u_n = 0\}$ of zeros of an
LRS $\uu$ as an ultimately periodic set.  The proof of the
Skolem-Mahler-Lech Theorem is non-effective, and the decidability of
Skolem's Problem is open.  Blondel and Tsitsiklis~\cite{BlondelT00}
remark that ``the present consensus among number theorists is that an
algorithm [for Skolem's Problem] should exist''.  However, so far
decidability is known only for LRS of order at most 4: a result due
independently to Vereschagin~\cite{Ver85} and Mignotte, Shorey, and
Tijdeman~\cite{MST84}.  At order 5 decidability is not known, even for
simple LRS~\cite{OW12}.  Decidability of Skolem's Problem is also
listed as an open problem and discussed at length by Tao~\cite[Section
  3.9]{Tao08}.  The problem can furthermore be seen as a
generalisation of the Orbit Problem, studied by Kannan and
Lipton~\cite[Section 5]{KL86}.

In contrast to the situation with Skolem's Problem,
Problem~2---hitting zero infinitely often---was shown to be decidable
for arbitrary LRS by Berstel and Mignotte~\cite{BM76}.
  
Problems 3 and 4 are respectively known as the \defemph{Positivity}
and \defemph{Ultimate Positivity} Problems.
The problems are stated as open
in~\cite{BG07,HHH06,LT09}, among others, while in~\cite{RS94} the
authors assert that the problems are ``generally conjectured [to be]
decidable''.  Decidability of Positivity entails decidability of
Skolem's Problem via a straightforward algebraic transformation of LRS
(which however does not preserve the order)~\cite{HHH06}.

Hitherto, all decidability results for Positivity and Ultimate
Positivity have been for low-order sequences.  The
paper~\cite{OW14:SODA} gives a detailed account of these results,
obtained over a period of time stretching back some 30 years, and
proves decidability of both problems for sequences of order at most 5.
It is moreover shown in~\cite{OW14:SODA} that obtaining
decidability for either Positivity or Ultimate Positivity at order 6
would necessarily entail major breakthroughs in Diophantine
approximation.

The main result of this paper is that the Ultimate Positivity Problem
for simple LRS of arbitrary order is decidable.  The restriction to
simple LRS allows us to circumvent the strong ``mathematical
hardness'' result for sequences of order 6 alluded to above.  However,
our decision procedure is non-constructive: given an ultimately
positive LRS $\langle u_n \rangle_{n=0}^\infty$, the procedure does
not compute a threshold $N$ such that $u_n \geq 0$ for all $n\geq N$.
Indeed the ability to compute such a threshold $N$ would immediately yield
an algorithm for the Positivity Problem for simple LRS since the
signs of $u_0,\ldots,u_{N-1}$ can be checked directly.  In turn
this would yield decidability of Skolem's Problem for simple LRS\@.
But Skolem's Problem is open for simple LRS of order 5, while (as
discussed below) Positivity for simple LRS is only known to be
decidable up to order 9.

The non-constructive aspect of our results arises from our use of
lower bounds in Diophantine approximation concerning sums of
\emph{$S$-units}.  These bounds were proven in~\cite{Evertse84,PS82}
using Schlickewei's ${p}$-adic generalisation of
Schmidt's Subspace Theorem (itself a far-reaching generalisation of
the Thue-Siegel-Roth Theorem), and therein applied to study the
asymptotic growth of LRS in absolute value.  By contrast,
in~\cite{OW13:constructive-positivity} we use Baker's Theorem on
linear forms in logarithms to show decidability of Positivity for
simple LRS of order at most 9.  Unfortunately, while Baker's Theorem
yields effective Diophantine-approximation lower bounds, it appears
only to be applicable to low-order LRS\@.  In particular, the analytic
and geometric arguments that are used
in~\cite{OW13:constructive-positivity} to bring Baker's Theorem to
bear (and which give that work a substantially different flavour to
the present paper) do not apply beyond order 9.

Relying on complexity bounds for the decision problem for first-order
formulas over the field of real numbers, we show that our procedure
for deciding Ultimate Positivity requires polynomial space in general
and polynomial time for LRS of each fixed order.  As a complexity
lower bound, we give a polynomial-time reduction of the decision
problem for the universal theory of the reals to both the Positivity
and Ultimate Positivity Problems for simple LRS\@.  The decision
problem for the universal theory of the reals is easily seen to be
\textbf{coNP}-hard and, from the work of Canny~\cite{Canny88}, is
contained in \textbf{PSPACE}.  Thus the complexity of the Ultimate
Positivity problem for simple LRS lies between \textbf{coNP} and
\textbf{PSPACE}.  Hitherto the best lower bound known for either
Positivity or Ultimate Positivity was
\textbf{coNP}-hardness~\cite{BellDJB10}.  

Full proofs of all results can be found in the long version of this 
paper~\cite{OuaknineW13b}.

\section{Background}
\label{sec:background}
\paragraph{Number Theory.}
A complex number $\alpha$ is \defemph{algebraic} if it is a root of a
univariate polynomial with integer coefficients. The \defemph{defining
  polynomial} of $\alpha$, denoted $p_{\alpha}$, is the unique
integer polynomial of least degree, whose coefficients have no common
factor, that has $\alpha$ as a root. The \defemph{degree} of $\alpha$
is the degree of $p_\alpha$, and the \defemph{height} of $\alpha$ is
the maximum absolute value of the coefficients of $p_{\alpha}$.  If
$p_\alpha$ is monic then we say that $\alpha$ is an \defemph{algebraic
  integer}.  

For computational purposes an algebraic number $\alpha$ can be
represented by a polynomial $f$ that has $\alpha$ as a root, together
with an approximation of $\alpha$ with rational real and imaginary
parts of sufficient accuracy to distinguish $\alpha$ from the other
roots of $f$~\cite{TUCS05}.  We denote by $\size{\alpha}$ the length
of this representation.\footnote{In general we denote by $\size{X}$
  the length of the binary representation of a given object $X$.}  It
can be shown that $\size{\alpha}$ is polynomial in the degree and
logarithm of the height of $\alpha$.  Given a univariate polynomial
$f$, it is moreover known how to obtain representations of each of its
roots in time polynomial in $\size{f}$.


A \defemph{number field} $K$ is a finite-dimensional extension of 
$\mathbb{Q}$.  The set of algebraic integers in $K$ forms a ring,
denoted $\mathcal{O}$.  Given two ideals $I,J$ in $\mathcal{O}$, the
product $IJ$ is the ideal generated by the elements $ab$,
where $a\in I$ and $b\in J$. An ideal $P$ of $\mathcal{O}$ is
\defemph{prime} if $ab\in P$ implies $a\in P$ or $b\in P$.  The
fundamental theorem of ideal theory states that any non-zero
ideal in $\mathcal{O}$ can be written as the product of prime ideals,
and the representation is unique if the order of the prime ideals is
ignored.

We will need the following
classical result of Dirichlet~\cite{FT93}.

\begin{theorem}[Dirichlet]
Let $P$ be the set of primes and $P_{a,b}$ the set of primes congruent
to $a$ mod $b$, where $\gcd(a,b)=1$.  Then
\[ \lim_{n\rightarrow \infty} \frac{|P_{a,b} \cap \{1,\ldots,n\}|}{|P\cap \{1,\ldots,n\}|} = \frac{1}{\varphi(b)} \, , \]
where $\varphi$ denotes Euler's totient function.
\label{thm:dirichlet}
\end{theorem}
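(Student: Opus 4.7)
The plan is to follow Dirichlet's classical analytic approach, strengthened to yield natural density (as the statement demands) rather than the weaker notion of Dirichlet density. First I would introduce the group of Dirichlet characters $\chi$ of $(\mathbb{Z}/b\mathbb{Z})^*$, extended by zero on integers not coprime to $b$, and exploit the orthogonality relation
\[
\frac{1}{\varphi(b)} \sum_\chi \overline{\chi(a)}\,\chi(n) = \begin{cases} 1 & \text{if } n \equiv a \pmod{b}, \\ 0 & \text{otherwise,}\end{cases}
\]
to rewrite the counting function of $P_{a,b}$ as a $\mathbb{C}$-linear combination, over characters, of weighted prime sums $\sum_{p \leq x} \chi(p)$. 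This reduces the problem to understanding, for each $\chi$, the asymptotic behaviour of these weighted sums.

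Next I would introduce the Dirichlet $L$-functions $L(s,\chi) = \sum_{n \geq 1} \chi(n) n^{-s}$, along with their Euler products, and record their basic analytic properties: for the principal character $\chi_0$, $L(s,\chi_0)$ differs from the Riemann zeta function $\zeta(s)$ only by the finitely many Euler factors at primes dividing $b$, and hence inherits a simple pole at $s = 1$; for each non-principal $\chi$, partial summation applied to the bounded character sums extends $L(s,\chi)$ holomorphically to the half-plane $\Re(s) > 0$.

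The crux of the argument, and the step I expect to be the main obstacle, is to establish that $L(s,\chi)$ is non-vanishing on the closed line $\Re(s) = 1$ for every non-principal character $\chi$. Non-vanishing at the single point $s = 1$ for complex (non-real) characters can be deduced by studying the order of the pole of $\prod_\chi L(s,\chi)$ at $s = 1$ and exploiting the non-negativity of the coefficients of its logarithm. The separate and classically more delicate case of real (quadratic) characters requires an independent argument, for instance via Dirichlet's class-number formula, which expresses $L(1,\chi)$ as a strictly positive quantity. The extension of non-vanishing to the entire line $\Re(s) = 1$ then proceeds by the standard Hadamard--de la Vall\'ee Poussin trick based on the elementary inequality $3 + 4\cos\theta + \cos 2\theta \geq 0$, applied to products of $L$-functions.

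Finally, with this non-vanishing in hand, I would apply a complex Tauberian theorem (Wiener--Ikehara, or Newman's elementary variant) to each logarithmic derivative $-L'(s,\chi)/L(s,\chi)$ in order to obtain the prime number theorem in arithmetic progressions in the sharp form $\sum_{p \leq n,\, p \equiv a \pmod{b}} \log p \sim n/\varphi(b)$. Combined with the ordinary prime number theorem $\sum_{p \leq n} \log p \sim n$ and partial summation to convert the weighted sums into counting functions, this yields $|P_{a,b} \cap \{1,\ldots,n\}| \sim |P \cap \{1,\ldots,n\}|/\varphi(b)$, which is the claimed limit.
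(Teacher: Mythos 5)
The paper does not prove this theorem: it is invoked as a classical black-box result with a citation to Fr\"ohlich and Taylor, so there is no in-paper argument to compare against. Judged on its own merits, your outline is a correct account of the standard analytic proof, and you rightly flag the crucial distinction: the statement asserts \emph{natural} density, which is strictly stronger than the Dirichlet density that Dirichlet's original argument yields, and therefore forces you through the prime number theorem for arithmetic progressions --- non-vanishing of $L(s,\chi)$ on the whole line $\Re(s)=1$ together with a Tauberian step --- rather than merely the behaviour of $\log L(s,\chi)$ as $s\to 1^+$. One point of precision worth tightening: the basic Wiener--Ikehara theorem requires a Dirichlet series with non-negative coefficients, so it should be applied to the aggregated series $\frac{1}{\varphi(b)}\sum_\chi \overline{\chi(a)}\,\bigl(-L'/L\bigr)(s,\chi)$, i.e.\ the Dirichlet series of $\Lambda$ restricted to the progression $a \bmod b$, rather than to each $-L'/L$ separately (whose coefficients oscillate in sign for non-principal $\chi$); alternatively, use a version of Newman's theorem formulated for bounded, not necessarily non-negative, coefficient sequences. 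With that adjustment the architecture of your argument is sound and does establish the stated limit.
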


\paragraph{Linear Recurrence Sequences.}
Let $\uu = \langle u_n \rangle_{n=0}^{\infty}$ be a sequence of
rational numbers satisfying the recurrence relation $u_{n+k} = a_1
u_{n+k-1} + \ldots + a_k u_{n}$.  We represent such an LRS as a
$2k$-tuple $(a_1, \ldots, a_k, u_0, \ldots, u_{k-1})$ of rational
numbers (encoded in binary).  Given an arbitrary representation of
$\uu$, we can compute the coefficients of the unique minimal-order
recurrence satisfied by $\uu$ in polynomial time by straightforward
linear algebra.  Henceforth we will always assume that an LRS is
presented in terms of its minimal-order recurrence.  By the
characteristic polynomial of an LRS we mean the characteristic
polynomial of the minimal-order recurrence.  The roots of this
polynomial are called the \defemph{characteristic roots}.  The
characteristic roots of maximum modulus are said to be
\defemph{dominant}.

It is well-known (see, e.g.,~\cite[Thm.~2]{BG07}) that if an LRS
$\uu$ has no real positive dominant
characteristic root then there are infinitely many $n$ such that $u_n
< 0$ and infinitely many $n$ such that $u_n > 0$.  Clearly such an LRS cannot
be ultimately positive.

Since the characteristic polynomial of $\uu$ has real coefficients,
its set of roots can be written in the form $\{\rho_1, \ldots,
\rho_{\ell}, \gamma_1, \overline{\gamma_1}, \ldots, \gamma_m,
\overline{\gamma_m}\}$, where each $\rho_i \in \mathbb{R}$.  If $\uu$
is simple then there are non-zero real algebraic constants $b_1,
\ldots, b_{\ell}$ and complex algebraic constants $c_1, \ldots, c_m$
such that, for all $n \geq 0$,
\begin{equation}
\label{eq:exp}
u_n = \sum_{i=1}^{\ell} b_i \rho_i^n + 
      \sum_{j=1}^m \left(c_j \gamma_j^n + 
                  \overline{c_j} \overline{\gamma_j}^n\right) \,.
\end{equation}

Conversely, a sequence $\uu$ that admits the representation
(\ref{eq:exp}) is a simple LRS over $\mathbb{R}$, with characteristic
roots among $\rho_1, \ldots, \rho_{\ell}, \gamma_1, \overline{\gamma_1},
\ldots, \gamma_m, \overline{\gamma_m}$.  Arbitrary LRS admit a more
general ``exponential-polynomial'' representation in which the
coefficients $b_i$ and $c_j$ are replaced by polynomials in $n$.

An LRS is said to be \defemph{non-degenerate} if it does not have two
distinct characteristic roots whose quotient is a root of unity.  A
non-degenerate LRS is either identically zero or only has finitely
many zeros.  The study of arbitrary LRS can effectively be reduced to
that of non-degenerate LRS using the following result
from~\cite{BOOK}.
\begin{proposition}
Let $\langle u_n \rangle_{n=0}^\infty$ be an LRS of order $k$ over
$\mathbb{Q}$.  There is a constant $M = 2^{O(k \sqrt{\log k})}$
such that each subsequence $\langle u_{Mn+l}
\rangle_{n=0}^\infty$ is non-degenerate for $0 \leq l < M$.
\label{prop:non-degenerate}
\end{proposition}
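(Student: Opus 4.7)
The strategy is to choose one modulus $M$ that simultaneously kills every root-of-unity quotient between distinct characteristic roots of $\langle u_n\rangle$. Let $\gamma_1,\ldots,\gamma_r$ be the distinct characteristic roots (so $r\leq k$), and let $D$ be the set of orders $d\geq 1$ such that $\gamma_i/\gamma_j$ is a primitive $d$-th root of unity for some $i\neq j$. I would take $M=\mathrm{lcm}(D)$. Using the exponential-polynomial representation of $\langle u_n\rangle$, one obtains
\[
u_{Mn+l}\;=\;\sum_{i=1}^{r}\gamma_i^{l}\,p_i(Mn+l)\,(\gamma_i^M)^n,
\]
so the characteristic roots of each subsequence $\langle u_{Mn+l}\rangle_{n=0}^{\infty}$ lie in the fixed set $\{\gamma_1^M,\ldots,\gamma_r^M\}$, independently of $l$.

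Non-degeneracy then follows directly. Suppose $\gamma_i^M\neq\gamma_j^M$ but $(\gamma_i/\gamma_j)^M$ were a primitive $e$-th root of unity with $e\geq 2$: then $\gamma_i/\gamma_j$ would itself be a root of unity of order $Me>1$, so $Me\in D$ and $Me\mid M$, a contradiction. Hence the distinct values among $\{\gamma_1^M,\ldots,\gamma_r^M\}$ have pairwise non-root-of-unity quotients, which is exactly the non-degeneracy condition stated in Section~\ref{sec:LRS}. Since this holds uniformly in $l$, all $M$ shifted subsequences are non-degenerate simultaneously.

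The main obstacle is the quantitative bound $M=2^{O(k\sqrt{\log k})}$. Every element of $D$ is the order of a root of unity lying in the splitting field $K$ of the characteristic polynomial of $\langle u_n\rangle$; the group of roots of unity in $K$ is cyclic of some order $w_K$, we have $M\mid w_K$, and $\varphi(w_K)\leq[K:\mathbb{Q}]$. A crude use of $[K:\mathbb{Q}]\leq k!$ combined with $\varphi(d)\geq\sqrt{d/2}$ would yield only $M\leq 2^{O(k\log k)}$, which is too weak. To reach the claimed asymptotic I would decompose $M$ prime-by-prime and control the cyclotomic subfields of $K$ through the orbit structure of $\mathrm{Gal}(K/\mathbb{Q})$ acting on the $k$ characteristic roots, so that each prime-power contribution $p^{e_p}\mid M$ is tied to the order of a permutation of at most $k$ points. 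The key group-theoretic input is a Landau-type estimate on least common multiples arising from partitions of $k$; combining this with the totient constraints produces the stated bound. Carrying out this number-theoretic bookkeeping is the principal obstacle—everything else is routine reorganisation of the exponential-polynomial representation.
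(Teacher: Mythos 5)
Your construction of $M$ (the least common multiple of the orders of root-of-unity quotients of distinct characteristic roots) is exactly the one the paper describes in the paragraph following the proposition, and the reorganisation of the exponential-polynomial form to show that each $\langle u_{Mn+l}\rangle$ has characteristic roots among $\{\gamma_1^M,\ldots,\gamma_r^M\}$ is fine. The non-degeneracy argument reaches the right conclusion, but the intermediate claim is off: if $(\gamma_i/\gamma_j)^M$ is a primitive $e$-th root of unity, the order of $\gamma_i/\gamma_j$ need not be $Me$. For instance, if $\gamma_i/\gamma_j$ is a primitive sixth root of unity and $M=4$, then $(\gamma_i/\gamma_j)^M$ is a primitive cube root of unity, so $e=3$, yet the order of $\gamma_i/\gamma_j$ is $6$, not $Me=12$. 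The correct step is simply that if $(\gamma_i/\gamma_j)^M\neq 1$ then $\gamma_i/\gamma_j$ is a root of unity of some order $d\geq 2$, hence $d\in D$, hence $d\mid M$, hence $(\gamma_i/\gamma_j)^M=1$, contradicting $e\geq 2$.

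The quantitative bound $M=2^{O(k\sqrt{\log k})}$ is where your proposal stops, and that is indeed the only hard part. You should be aware, though, that the paper does not prove this bound either: Proposition~\ref{prop:non-degenerate} is quoted from~\cite{BOOK}, and the surrounding text only records what $M$ is and that it can be computed in polynomial time. Your sketch---controlling the prime-power contributions to $M$ through the Galois action on the $k$ characteristic roots, cyclotomic subfields, and a Landau-type estimate on least common multiples---points in a plausible direction, but as you note yourself the naive route via $\varphi(w_K)\le[K:\mathbb{Q}]\le k!$ only delivers $2^{O(k\log k)}$; tightening this to $2^{O(k\sqrt{\log k})}$ requires real number-theoretic bookkeeping that neither you nor the paper carries out. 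For the purposes of this paper your treatment is adequate provided you, like the authors, cite the bound from~\cite{BOOK} rather than derive it.
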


The constant $M$ in Proposition~\ref{prop:non-degenerate} is the least
common multiple of the orders of all roots of unity appearing as
quotients of characteristic roots of $\uu$.  This number can be
computed in time polynomial in $\size{\uu}$ since determining whether
an algebraic number $\alpha$ is a root of unity (and computing the
order of the root) can be done in polynomial time in
$\size{\alpha}$~\cite{TUCS05}.  From the representation (\ref{eq:exp}) we
see that if the original LRS is simple with characteristic roots
$\lambda_1,\ldots,\lambda_k$, then each subsequence $\langle u_{Mn+l}
\rangle_{n=0}^\infty$ is also simple, with characteristic roots
among $\lambda_1^M,\ldots,\lambda_k^M$.

The following is a celebrated result on LRS~\cite{Lec53,Mah35,Sko34}.
\begin{theorem}[Skolem-Mahler-Lech]
The set $\{ n : u_n=0\}$ of zeros of an LRS $\uu$ comprises a finite
set together with a finite number of arithmetic progressions.  If
$\uu$ is non-degenerate and not
identically zero, then its set of zeros is finite.
\label{thm:SML}
\end{theorem}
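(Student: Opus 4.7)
The plan is to give the classical Skolem--Mahler--Lech argument via $p$-adic analysis. First I would reduce to the non-degenerate case using Proposition~\ref{prop:non-degenerate}: decompose $\uu$ into $M$ subsequences $\langle u_{Mn+l}\rangle_{n=0}^{\infty}$ for $0 \leq l < M$, each of which is non-degenerate. If the finiteness statement is established for non-degenerate LRS, then each subsequence contributes either finitely many zeros or an entire arithmetic progression $\{Mn+l : n \in \mathbb{N}\}$ to the zero set of $\uu$, and the structural claim follows by taking the union.

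For a non-degenerate $\uu$ that is not identically zero, write $u_n = \sum_{i=1}^k b_i \lambda_i^n$ and let $K$ be a number field containing all the $\lambda_i$ and $b_i$. The crux is a judicious choice of a rational prime $p$: one requires $p$ to be unramified in $K$ and each $\lambda_i$ to be a unit in the completion $K_\mathfrak{p}$ for some prime $\mathfrak{p}$ of $\mathcal{O}_K$ above $p$. Only finitely many primes must be excluded to meet these conditions, so such a $p$ always exists. Let $q$ denote the size of the residue field $\mathcal{O}_K/\mathfrak{p}$; then $\lambda_i^{q-1}$ is a principal unit for every $i$, and for each residue $r \in \{0, 1, \ldots, q-2\}$ one defines
\[
f_r(z) = \sum_{i=1}^k b_i \lambda_i^r \exp\bigl(z \log \lambda_i^{q-1}\bigr),
\]
a $p$-adic analytic function on $\mathbb{Z}_p$ (possibly after replacing $q-1$ by a suitable multiple to guarantee convergence of the $p$-adic exponential, an issue only at $p=2$). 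By construction $f_r(m) = u_{(q-1)m+r}$ for every non-negative integer $m$.

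By Strassmann's theorem, each $f_r$ either vanishes identically on $\mathbb{Z}_p$ or has only finitely many zeros there. The non-degeneracy hypothesis rules out the former alternative: the ratios $\lambda_i^{q-1}/\lambda_j^{q-1}$ for $i \neq j$ are non-trivial principal units (since $\lambda_i/\lambda_j$ is not a root of unity), so their $p$-adic logarithms are distinct, and $p$-adic linear independence of exponential functions with distinct exponents forces every $b_i \lambda_i^r$ to vanish, contradicting the non-vanishing of $\uu$. Summing the finite contribution of each $f_r$ over $r = 0, \ldots, q-2$ gives finiteness of the zero set of $\uu$ and completes the argument.

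The principal obstacle is the $p$-adic set-up: selecting a prime $p$ with good reduction of all the $\lambda_i$ (elementary once the finitely many bad primes are ruled out), handling convergence of the $p$-adic exponential---which forces a minor adjustment of the modulus at small residue characteristics---and making precise the linear-independence argument used to exclude identical vanishing of $f_r$. All of these are standard ingredients of local-field arithmetic but constitute the essential non-trivial content of the proof; the remainder reduces to elementary manipulation of the exponential-polynomial representation~(\ref{eq:exp}).
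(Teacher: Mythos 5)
The paper does not prove Theorem~\ref{thm:SML}: it appears in the Background section, which the authors explicitly describe as a collection of results ``stated without proof,'' attributed to Skolem, Mahler, and Lech~\cite{Sko34,Mah35,Lec53} with the reader referred to~\cite{BOOK} for details. There is therefore no in-paper argument to compare against. Your sketch is a correct outline of the standard $p$-adic proof, and it is the argument one finds in the cited sources. The reduction to the non-degenerate case via Proposition~\ref{prop:non-degenerate}, and the reassembly of the zero set of $\uu$ as a finite union of full arithmetic progressions (from identically-zero subsequences) plus finitely many sporadic zeros (from the rest), gives the structural statement correctly. The core step---choosing a rational prime $p$, unramified in $K$, at which every $\lambda_i$ and every coefficient $b_i$ is a $\mathfrak{p}$-adic unit, then interpolating $m \mapsto u_{(q-1)m+r}$ by a $p$-adic analytic function $f_r$ on $\mathbb{Z}_p$ and applying Strassmann's theorem---is right, and you correctly flag the two delicate points: convergence of the $p$-adic exponential (which may force replacing $q-1$ by a multiple when $p=2$), and the role of non-degeneracy in ensuring the exponents $\log\lambda_i^{q-1}$ are pairwise distinct, since a coincidence would make some $\lambda_i/\lambda_j$ a root of unity. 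One step you leave implicit but should state: the linear independence of $p$-adic exponentials $\exp(zc_i)$ with distinct $c_i$, which one sees by comparing Taylor coefficients and invoking a Vandermonde determinant, so that $f_r\equiv 0$ indeed forces every $b_i\lambda_i^r$ to vanish.
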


Suppose that $\uu$ and $\vv$ are LRS of orders $k$ and $l$
respectively, then the pointwise sum $\langle u_n+v_n
\rangle_{n=0}^\infty$ is an LRS of order at most $k+l$, and the
pointwise product $\langle u_nv_n \rangle_{n=0}^\infty$ is an LRS of
order at most $kl$.  Given representations of $\uu$ and $\vv$ we can
compute representations of the sum and product in polynomial time by
straightforward linear algebra.

\paragraph{First-Order Theory of the Reals.}
Let $\boldsymbol{x} = x_1, \ldots, x_m$ be a list of $m$ real-valued
variables, and let $\sigma(\boldsymbol{x})$ be a Boolean combination
of atomic predicates of the form $g(\boldsymbol{x}) \sim 0$, where
each $g(\boldsymbol{x})$ is a polynomial with integer coefficients in
the variables $\boldsymbol{x}$, and $\sim$ is either $>$ or $=$.  We
consider the problem of deciding the truth over the field $\mathbb{R}$
of sentences $\varphi$ in the form
\begin{equation}
\label{for-formula}
Q_1 x_1 \ldots Q_m x_m \, \sigma (\boldsymbol{x}) \, ,
\end{equation}
where each $Q_i$ is one of the quantifiers $\exists$ or $\forall$.
We write $\size{\varphi}$ for the
length of the syntactic representation of $\varphi$.

The collection of true sentences of the form (\ref{for-formula}) is
called the \textbf{first-order theory of the reals}.  Tarski famously showed
that this theory admits quantifier elimination and is therefore
decidable.  In this paper we rely on decision procedures for two
fragments of this theory.  We use the result of Canny~\cite{Canny88}
that if each $Q_i$ is a universal quantifier, then the truth of
$\varphi$ can be decided in space polynomial in $\size{\varphi}$.  We
also use the result of Renegar~\cite{Ren92} that for each fixed $M \in
\mathbb{N}$, if the number of variables in $\varphi$ is at most $M$,
then the truth of $\varphi$ can be determined in time polynomial in
$\size{\varphi}$.

Given a representation of an algebraic number $\alpha$, as described
in Section~\ref{sec:background}, both the real and imaginary parts of
$\alpha$ are straightforwardly definable by quantifier-free formulas
$\varphi(x)$ of size polynomial in $\size{\alpha}$.

\section{Multiplicative Relations}
Throughout this section let
$\boldsymbol{\lambda}=(\lambda_1,\ldots,\lambda_s)$ be a tuple of
algebraic numbers, each of height at most $H$ and degree at most $d$.
Assume that each $\lambda_i$ is represented in the manner described in
Section~\ref{sec:background}.  

We define the group of multiplicative
relations holding among the $\lambda_i$ to be the subgroup
$L(\boldsymbol{\lambda})$ of $\mathbb{Z}^s$ defined by
\[ L(\boldsymbol{\lambda}) = \{ (v_1,\ldots,v_s) \in \mathbb{Z}^s :
\lambda_1^{v_1} \ldots \lambda_s^{v_s} = 1\} \, . \] 

Bounds on the complexity of computing a basis of
$L(\boldsymbol{\lambda})$, considered as a free abelian group,
can be obtained from the following result of Masser~\cite{Mas88}
which gives an upper bound on the magnitude of the entries of the
vectors in such a basis.

\begin{theorem}[Masser]
The free abelian group $L(\boldsymbol{\lambda})$ has a basis
$\boldsymbol{v}_1,\ldots,\boldsymbol{v}_l \in \mathbb{Z}^s$ for which
\[\max_{1 \leq i \leq l,\, 1 \leq j \leq s} |\boldsymbol{v}_{i,j}| 
= (d \log H)^{O(s^2)}\, . \]
\label{thm:masser}
\end{theorem}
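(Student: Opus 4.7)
The plan is to combine a logarithmic embedding of $\lambda_1, \ldots, \lambda_s$ with the geometry of numbers, in the spirit of the Dirichlet $S$-unit theorem. Let $K = \mathbb{Q}(\lambda_1, \ldots, \lambda_s)$ and let $S$ be the finite set of places of $K$ consisting of all archimedean places together with the non-archimedean places at which some $\lambda_i$ has nonzero valuation. Each $\lambda_i$ is then an $S$-unit in $K$, and since each has degree at most $d$ and height at most $H$ one has $|S| \leq (d \log H)^{O(1)}$. The standard logarithmic embedding $\Phi \colon K^{*} \to \mathbb{R}^{|S|}$ given by $\Phi(\alpha) = (\log |\alpha|_v)_{v \in S}$ kills only the roots of unity in $K$ on the $S$-units, and $|\mu(K)| \leq 2d$.

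First, I would form the $\mathbb{Z}$-linear map $\Psi \colon \mathbb{Z}^s \to \mathbb{R}^{|S|}$ defined by $\Psi(v_1, \ldots, v_s) = \sum_i v_i \Phi(\lambda_i)$. Writing $\Lambda := \ker(\Psi)$, a sublattice of $\mathbb{Z}^s$ of some rank $l$, one has $L(\boldsymbol{\lambda}) \subseteq \Lambda$ with finite index at most $|\mu(K)| \leq 2d$. It therefore suffices to find a basis of $\Lambda$ satisfying the stated bound and then to recover a basis of $L(\boldsymbol{\lambda})$ at the cost of an extra factor of at most $2d$ in the entries.

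The next step is to apply Minkowski's second theorem on successive minima to $\Lambda$, equipped with the Euclidean norm inherited from $\mathbb{Z}^s$. Standard Mahler-measure estimates give $|\log |\lambda_i|_v| = O(\log(dH))$ at every place $v \in S$, so by the Cauchy--Binet formula every maximal minor of the $|S| \times s$ matrix with columns $\Phi(\lambda_i)$ is bounded by $(d \log H)^{O(s)}$, and hence so is the covolume $\det(\Lambda)$. Minkowski's theorem then yields $\mu_1 \cdots \mu_l \leq l!\, \det(\Lambda) \leq (d \log H)^{O(s)}$, and a classical lemma of Mahler converts successive minima into a basis $\boldsymbol{v}_1, \ldots, \boldsymbol{v}_l$ of $\Lambda$ with $\|\boldsymbol{v}_i\|_\infty \leq \max(1, i/2)\, \mu_i$.

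The hard part is controlling the \emph{largest} minimum $\mu_l$: the product bound alone does not prevent $\mu_l$ from being astronomical when some $\mu_i$ is tiny. Masser's key input is a lower bound of the form $\mu_1 \geq (d \log H)^{-O(s)}$, obtained by observing that a short vector $\boldsymbol{v} \in \Lambda$ corresponds to an $S$-unit $\lambda_1^{v_1} \cdots \lambda_s^{v_s}$ that is a root of unity, and by combining this with a Dobrowolski-type lower bound on the Mahler measure of algebraic numbers that are not roots of unity. Balancing the lower bound on $\mu_1$ (and, iteratively, on each $\mu_i$) against the upper bound on the product $\mu_1 \cdots \mu_l$ gives $\mu_l \leq (d \log H)^{O(s^2)}$, which is precisely the claimed uniform bound on the basis vectors.
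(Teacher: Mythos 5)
The paper does not prove this theorem: it is quoted verbatim from Masser's article \emph{Linear relations on algebraic groups} (reference [Mas88]) and used as a black box, so there is no ``paper proof'' to compare against. What follows is therefore an assessment of your sketch on its own merits.

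Your overall plan---logarithmic embedding, geometry of numbers via successive minima, and a Dobrowolski-type height lower bound---is indeed the right circle of ideas for a result of this kind, but several of the concrete steps do not hold up.

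First, the estimate $|S| \leq (d\log H)^{O(1)}$ is false. The compositum $K = \mathbb{Q}(\lambda_1,\ldots,\lambda_s)$ can have degree as large as $d^s$ over $\mathbb{Q}$, so the number of archimedean places alone can be exponential in $s$; likewise the number of finite places at which some $\lambda_i$ is non-integral, and the order of the torsion subgroup $\mu(K)$, can be far larger than $2d$. This does not by itself kill the argument (the final target is only $(d\log H)^{O(s^2)}$), but it means the bookkeeping you do afterwards, which feeds $|S|$ into the covolume bound, needs to be redone carefully with the correct exponential dependence on $s$.

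Second, and more seriously, the Cauchy--Binet step does not bound $\det(\Lambda)$. Bounding the maximal minors of the $|S|\times s$ matrix $A$ with columns $\Phi(\lambda_i)$ controls the covolume of the \emph{image} lattice $A(\mathbb{Z}^s)$, not the covolume of $\Lambda=\ker(A)\cap\mathbb{Z}^s$ inside $\ker(A)$. For a real matrix with entries of bounded size, the latter can be arbitrarily large: for instance with $A=[1,\,-1+1/N]$ one has $\ker(A)\cap\mathbb{Z}^2=\mathbb{Z}\cdot(N-1,N)$, whose covolume grows with $N$ while the entries of $A$ stay bounded. So Minkowski's second theorem has nothing to bite on as you have set it up, and the claimed upper bound on $\mu_1\cdots\mu_l$ is unjustified.

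Third, the Dobrowolski step is pointed at the wrong object. A nonzero $\boldsymbol{v}\in\Lambda$ produces an $S$-unit $\lambda_1^{v_1}\cdots\lambda_s^{v_s}$ that \emph{is} a root of unity, so Dobrowolski-type lower bounds (which concern algebraic numbers that are \emph{not} roots of unity) say nothing about it. In any case a lower bound on $\mu_1(\Lambda)$ is trivial: any nonzero integer vector has Euclidean norm at least $1$, so $\mu_1\geq 1$ and the ``balancing'' step you describe never needed a sophisticated input. Where a Dobrowolski-type bound genuinely does enter Masser-style arguments is on the quotient side: for $\boldsymbol{v}\notin\Lambda$ the number $\prod\lambda_i^{v_i}$ is not a root of unity, so its Weil height is bounded below; this gives control on the metric induced by the height on $\mathbb{Z}^s/\Lambda$, and one must then pass to $\Lambda$ itself via a duality or transference argument between a lattice and its quotient (or orthogonal complement) in $\mathbb{Z}^s$. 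That transference step, together with a correct treatment of the covolume, is the real content that your sketch omits.

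In short: right ingredients, but the covolume bound is unfounded, the Dobrowolski bound is applied to the wrong side of $\Lambda$, and the dependence on the degree of the compositum $K$ (which can be $d^s$) is not tracked.
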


\begin{corollary}
A basis of $L(\boldsymbol{\lambda})$ can be computed in space
polynomial in $\size{\boldsymbol{\lambda}}$.  If $s$ and $d$ are
fixed, such a basis can be computed in time polynomial in
$\size{\boldsymbol{\lambda}}$.
\label{corl:basis}
\end{corollary}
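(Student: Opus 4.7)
The starting point is Masser's theorem, which yields an explicit bound $B := (d \log H)^{O(s^2)}$ on the absolute values of the entries of some basis of $L(\boldsymbol{\lambda})$. Note that $\log B$ is polynomial in $\size{\boldsymbol{\lambda}}$, so any such basis admits a polynomial-bit encoding; moreover, when $s$ and $d$ are held fixed, $B$ itself is polynomial in $\size{\boldsymbol{\lambda}}$. The plan is to reduce basis computation to membership queries ``$\boldsymbol{v} \in L(\boldsymbol{\lambda})$?'' encoded in the first-order theory of $\mathbb{R}$, then assemble a basis by incremental Hermite reduction.

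For a specific candidate $\boldsymbol{v} \in [-B,B]^s$, membership is equivalent to $\prod_{i=1}^s \lambda_i^{v_i} = 1$, which I would express as a first-order sentence $\varphi_{\boldsymbol{v}}$ over $\mathbb{R}$ by introducing existentially quantified real variables for the real and imaginary parts of each $\lambda_i$, constraining them via the quantifier-free defining formulas from Section~\ref{sec:number-theory}, and then asserting the product equation. In the general setting, since $v_i$ may be exponential, I would compute each $\lambda_i^{v_i}$ via iterated squaring, introducing $O(\log |v_i|)$ auxiliary existential variables per coordinate, so that $\varphi_{\boldsymbol{v}}$ has polynomial total size. In the fixed-$s,d$ setting, $v_i$ is only polynomially large, so the equation can instead be written directly as two real polynomial identities in the $2s = O(1)$ variables for the real and imaginary parts of the $\lambda_i$, with polynomial degree but no auxiliary variables.

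Given these formulas, the PSPACE upper bound follows by iterating through all $\boldsymbol{v} \in [-B,B]^s$ in lexicographic order (polynomial-space counter, since each vector has polynomial bit-length), deciding each $\varphi_{\boldsymbol{v}}$ via Canny's PSPACE algorithm, and maintaining a Hermite-normal-form basis of the sublattice generated by the discovered lattice elements, updated incrementally as new elements are found. By Masser's bound, a basis of $L(\boldsymbol{\lambda})$ is contained in the enumerated set, so the final Hermite basis is a basis of all of $L(\boldsymbol{\lambda})$. For the polynomial-time bound when $s$ and $d$ are fixed, $[-B,B]^s$ has only polynomially many elements, and each $\varphi_{\boldsymbol{v}}$ uses a fixed number of variables, so Renegar's algorithm decides each query in polynomial time, after which Hermite reduction of the discovered vectors is routine.

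The chief technical point is designing $\varphi_{\boldsymbol{v}}$ carefully: iterated squaring is essential in the general case to keep the formula polynomial-sized despite exponential exponents, whereas in the fixed-dimension case one must conversely avoid introducing $\omega(1)$ auxiliary variables in order to remain within Renegar's fixed-variable hypothesis, which is why a direct polynomial-degree encoding is preferable there.
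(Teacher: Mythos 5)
Your proposal is correct and follows essentially the same route as the paper: apply Masser's bound to get a polynomial bit-length cap on basis entries, decide membership queries by reduction to the existential theory of the reals (Canny for PSPACE, Renegar for the fixed-parameter polynomial-time case), and assemble the basis by brute-force enumeration. You make explicit two details the paper leaves implicit---how the brute-force search produces a basis (incremental Hermite reduction over the discovered generators) and the tension in the membership encoding between needing iterated squaring to keep the formula polynomial-sized in the general case versus needing to avoid auxiliary variables to stay within Renegar's fixed-variable regime---but these are refinements of the paper's argument rather than a different approach.
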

\begin{proof}
Masser's bound entails that there is a basis
$\boldsymbol{v}_1,\ldots,\boldsymbol{v}_l$ whose total bit length is polynomial in
$s$, $\log d$ and $\log\log H$, all of which are
polynomial in $\size{\boldsymbol{\lambda}}$.  Moreover the membership
problem ``$\lambda_1^{v_1}\ldots \lambda_s^{v_s}=1$?''  for a
potential basis vector $\boldsymbol{v} \in \mathbb{Z}^s$ is decidable
in space polynomial in $\size{\boldsymbol{\lambda}}$ by reduction to
the decision problem for existential sentences over the reals.  

A set of vectors $\boldsymbol{v}_1,\ldots,\boldsymbol{v}_l$ in
$L(\boldsymbol{\lambda})$ is a basis if every vector $\boldsymbol{v}
\in L(\boldsymbol{\lambda})$ whose entries satisfy the bound in
Theorem~\ref{thm:masser} lies in the integer span of
$\boldsymbol{v}_1,\ldots,\boldsymbol{v}_l$.  For each such vector
$\boldsymbol{v}$ this can be checked by solving a system of linear
equations over the integers.  Thus we can compute a basis of
$L(\boldsymbol{\lambda})$ in space polynomial in
$\size{\boldsymbol{\lambda}}$ by brute-force search.

If $s$ and $d$ are fixed then the same brute-force search can be done
in time polynomial in $\size{\boldsymbol{\lambda}}$, noting that the
number of possible bases is polynomial in
$\size{\boldsymbol{\lambda}}$ and the membership problem
``$\lambda_1^{v_1}\ldots \lambda_s^{v_s}=1$?'' is decidable in
time polynomial in $\size{\boldsymbol{\lambda}}$ by reduction to the
decision problem for existential sentences over the reals with a fixed
number of variables.\qed
\end{proof}

The following is an easy consequence of Corollary~\ref{corl:basis}.
\begin{corollary}
Given $M \in \mathbb{N}$, a basis of $L(\lambda_1^M,\ldots,\lambda_s^M)$
can be computed in space polynomial in $\size{M}$ and
$\size{\boldsymbol{\lambda}}$.
\label{corl:scale}
\end{corollary}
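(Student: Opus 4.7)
The plan is to reduce the problem to Corollary~\ref{corl:basis} via the lattice identity
\[
L(\lambda_1^M,\ldots,\lambda_s^M) \;=\; \{\boldsymbol{v}\in\mathbb{Z}^s : M\boldsymbol{v}\in L(\boldsymbol{\lambda})\}\,,
\]
which follows immediately from the equivalence $\lambda_1^{Mv_1}\cdots\lambda_s^{Mv_s}=1 \Leftrightarrow M\boldsymbol{v}\in L(\boldsymbol{\lambda})$. Once a basis of $L(\boldsymbol{\lambda})$ is in hand, extracting a basis of $L(\lambda_1^M,\ldots,\lambda_s^M)$ becomes a piece of integer linear algebra, making no further appeal to the deep Diophantine content underlying Masser's theorem.

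First I would apply Corollary~\ref{corl:basis} to obtain, in space polynomial in $\size{\boldsymbol{\lambda}}$, a basis $\boldsymbol{b}_1,\ldots,\boldsymbol{b}_l$ of $L(\boldsymbol{\lambda})$, whose entries have bit-size polynomial in $\size{\boldsymbol{\lambda}}$ by Masser's bound. Let $B$ be the $s\times l$ integer matrix with columns $\boldsymbol{b}_i$. Next I would compute a Smith normal form $B = UDV$, where $U\in GL_s(\mathbb{Z})$, $V\in GL_l(\mathbb{Z})$, and $D$ is the $s\times l$ padded diagonal matrix with entries $d_1,\ldots,d_l$. Substituting $\boldsymbol{w} = U^{-1}\boldsymbol{v}$, the condition $M\boldsymbol{v}\in B\mathbb{Z}^l$ becomes $d_i\mid Mw_i$ for $1\le i\le l$ and $w_i = 0$ for $i>l$. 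The first condition is equivalent to $w_i\in(d_i/\gcd(d_i,M))\mathbb{Z}$, so pulling back by $U$ yields the basis
\[
\bigl\{(d_i/\gcd(d_i,M))\,U\boldsymbol{e}_i \,:\, 1\le i\le l\bigr\}
\]
of $L(\lambda_1^M,\ldots,\lambda_s^M)$ in the original coordinates.

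Using a polynomial-space algorithm for Smith normal form (for instance a modular method to avoid intermediate coefficient blow-up), together with standard $\gcd$ and matrix arithmetic, all intermediate and output integers remain of bit-size polynomial in $\size{B}$ and $\size{M}$, hence polynomial in $\size{\boldsymbol{\lambda}}$ and $\size{M}$. The main obstacle is already absorbed into Corollary~\ref{corl:basis}; the remaining steps form a standard piece of integer linear algebra, the only thing to watch being the choice of an SNF routine with controlled coefficient growth so that the \textbf{PSPACE} bound is preserved.
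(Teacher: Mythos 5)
Your reduction identity $L(\lambda_1^M,\ldots,\lambda_s^M)=\{\boldsymbol{v}\in\mathbb{Z}^s: M\boldsymbol{v}\in L(\boldsymbol{\lambda})\}$ is the same set the paper writes as $\frac{1}{M}\bigl(L(\boldsymbol{\lambda})\cap M\mathbb{Z}^s\bigr)$, and both proofs reduce the problem to Corollary~\ref{corl:basis} for obtaining a basis of $L(\boldsymbol{\lambda})$, so the high-level structure agrees. Where you diverge is in the integer-linear-algebra step: the paper computes $L(\boldsymbol{\lambda})\cap M\mathbb{Z}^s$ by invoking a black-box lattice-intersection routine (combining union and dual computations, citing a standard reference) and then scales by $1/M$; you instead compute a Smith normal form $B=UDV$ of the basis matrix, change coordinates via $\boldsymbol{w}=U^{-1}\boldsymbol{v}$, solve the resulting coordinatewise divisibility conditions $d_i\mid Mw_i$ to get the basis $\bigl\{(d_i/\gcd(d_i,M))\,U\boldsymbol{e}_i\bigr\}$, and push back. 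Both routes are standard polynomial-time lattice operations on the basis $B$ (the overall cost is still dominated by the \textbf{PSPACE} computation of $B$), but your SNF argument is more self-contained and produces the desired basis in closed form, at the modest cost of invoking a coefficient-controlled SNF algorithm; the paper's version is shorter on the page because it outsources the lattice arithmetic. Either version is a correct proof of the corollary.
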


Next we relate the group $L(\boldsymbol{\lambda})$ to the
\defemph{orbit} $ \{(\lambda_1^n,\ldots,\lambda_s^n) \mid n \in
\mathbb{N}\}$ of $\boldsymbol{\lambda}$.  Recall
from~\cite{Cassels65} the following classical theorem of Kronecker
on inhomogeneous Diophantine approximation.

\begin{theorem}[Kronecker]
Let $\theta_1,\ldots,\theta_s$ and $\psi_1,\ldots,\psi_s$ be real
numbers.  Suppose moreover that for all integers $u_1,\ldots,u_s$, if
$u_1\theta_1 + \ldots + u_s \theta_s \in \mathbb{Z}$ then also
$u_1\psi_1 + \ldots + u_s \psi_s \in \mathbb{Z}$, i.e., all integer
relations among the $\theta_i$ also hold among the $\psi_i$ (modulo
$\mathbb{Z}$).  Then for each $\varepsilon > 0$, there exist integers
  $p_1,\ldots,p_s$ and a non-negative integer $n$ such that
$|n\theta_i-p_i-\psi_i| \leq \varepsilon$.
\label{thm:kronecker}
\end{theorem}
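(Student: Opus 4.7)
The plan is to transport the question to the torus $\mathbb{T}^s = \mathbb{R}^s / \mathbb{Z}^s$.  Writing $\bar\theta$ and $\bar\psi$ for the images of $(\theta_1,\ldots,\theta_s)$ and $(\psi_1,\ldots,\psi_s)$ in $\mathbb{T}^s$, the desired approximation statement is precisely the assertion that $\bar\psi$ lies in the closure $H$ of the forward orbit $\{n\bar\theta : n \in \mathbb{N}\}$.  For each $u \in \mathbb{Z}^s$ the character $\chi_u : \mathbb{T}^s \to \mathbb{R}/\mathbb{Z}$ sending $x \mapsto u \cdot x$ satisfies $\chi_u(\bar\theta) = 0$ iff $u \cdot \theta \in \mathbb{Z}$, and the hypothesis of the theorem says that every such $u$ also satisfies $\chi_u(\bar\psi) = 0$.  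The task thus reduces to identifying $H$ as the common kernel of those characters $\chi_u$ that already vanish at $\bar\theta$.

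First I would check that $H$ is not merely a closed sub-semigroup but in fact a closed subgroup of $\mathbb{T}^s$.  By compactness, the sequence $(n\bar\theta)_{n\in\mathbb{N}}$ has a convergent subsequence $n_k \bar\theta \to y$; after passing to a further subsequence we may assume $n_{k+1} - n_k \geq 1$, so that $(n_{k+1}-n_k) \bar\theta \to 0$ and therefore $(n_{k+1}-n_k-1)\bar\theta \to -\bar\theta$, exhibiting $-\bar\theta$ as a forward limit.  Iterating, $-m\bar\theta \in H$ for every $m \in \mathbb{N}$, so $H$ contains $m\bar\theta$ for every $m \in \mathbb{Z}$ and, being closed, is a closed subgroup of $\mathbb{T}^s$.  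This observation also disposes of the non-negativity requirement on $n$ in the conclusion, since the forward orbit is already dense in the entire cyclic subgroup's closure.

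Next I would apply Pontryagin duality.  Every continuous character of $\mathbb{T}^s$ is of the form $\chi_u$ for some $u \in \mathbb{Z}^s$, and $\chi_u$ is trivial on $H$ iff it is trivial on the dense set $\{n\bar\theta : n \in \mathbb{N}\}$, iff $u \in L := \{u \in \mathbb{Z}^s : u\cdot \theta \in \mathbb{Z}\}$.  The double-annihilator identity for closed subgroups of a compact abelian group then yields
\[
H = \{ x \in \mathbb{T}^s : u \cdot x \in \mathbb{Z} \text{ for every } u \in L \}.
\]
The hypothesis of the theorem places $\bar\psi$ in this common kernel, so $\bar\psi \in H$, which is exactly the desired Diophantine-approximation conclusion for every $\varepsilon > 0$.

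The main obstacle in a self-contained write-up is the double-annihilator identity.  One can either invoke Pontryagin duality as a black box, or prove it directly via Stone--Weierstrass: if $\bar\psi$ lay outside the closed set $H$, a continuous function separating $\bar\psi$ from $H$ could be uniformly approximated by a trigonometric polynomial $\sum_{u} c_u \chi_u$; integrating against Haar measure on $H$ forces the $c_u$ with $u \notin L$ to be irrelevant, and the resulting finite combination of characters in $L$ would then have to vanish at $\bar\psi$ by hypothesis yet approximate a function that is nonzero there, a contradiction.  Once this step is in place the theorem assembles in a few lines.
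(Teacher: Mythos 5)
The paper does not prove this statement: Theorem~\ref{thm:kronecker} is stated as a recalled classical result with a citation to Cassels's textbook~\cite{Cassels65}, and the paper immediately applies it (in the proof of Proposition~\ref{prop:density}) as a black box. So there is no internal argument to compare your proof against.

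Your proof is correct and is the standard modern argument via harmonic analysis on compact abelian groups. Transporting to $\mathbb{T}^s$, showing the closure $H$ of the forward orbit is a closed sub\emph{group} via a compactness/return argument, identifying $H$ with the annihilator of the lattice $L = \{u \in \mathbb{Z}^s : u\cdot\theta \in \mathbb{Z}\}$ by the double-annihilator theorem for compact abelian groups, and observing that the hypothesis on $\psi$ is precisely $\bar\psi \in L^{\perp} = H$ --- all of this is sound. Your sketched Stone--Weierstrass substitute for the duality step (averaging a separating trigonometric polynomial against Haar measure on $H$ to kill the characters outside $L$) can indeed be completed and is the usual way to make that step self-contained. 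One phrasing point worth tightening: ``after passing to a further subsequence we may assume $n_{k+1}-n_k \geq 1$'' is automatic since the $n_k$ are strictly increasing; what actually matters, and what you go on to use correctly, is that $(n_{k+1}-n_k-1)\bar\theta$ lies in the forward orbit (the exponent is $\geq 0$) and converges to $-\bar\theta$. Note also that the non-negativity of $n$ in the conclusion is handled exactly by this group-closure step, which is why it deserves the care you give it. In short: the paper cites this theorem; you have supplied a correct proof of it by the standard duality route, which is a more conceptual (if less elementary) approach than the pigeonhole/simultaneous-approximation proofs found in some classical treatments such as Cassels.
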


Write $\mathbb{T} = \{ z \in \mathbb{C} : |z|=1\}$ and
consider the $s$-dimensional torus $\mathbb{T}^s$ as a group under
coordinatewise multiplication.  The following can be seen as a
multiplicative formulation of Kronecker's Theorem.

\begin{proposition}
  Let $\boldsymbol{\lambda}=(\lambda_1,\ldots,\lambda_s) \in
  \mathbb{T}^s$ and consider the group $L(\boldsymbol{\lambda})$ of
  multiplicative relations among the $\lambda_i$.  Define a subgroup
  $T(\boldsymbol{\lambda})$ of the torus $\mathbb{T}^s$ by
  \[ T(\boldsymbol{\lambda}) = \{ (\mu_1,\ldots,\mu_s) \in \mathbb{T}^s \mid
  \mu_1^{v_1} \ldots \mu_s^{v_s}=1 \mbox{ for all }
  \boldsymbol{v} \in L(\boldsymbol{\lambda}) \} \, . \] Then the
  orbit $S=\{(\lambda_1^n,\ldots,\lambda_s^n) \mid n \in \mathbb{N}\}$
  is a dense subset of $T(\boldsymbol{\lambda})$.
\label{prop:density}
\end{proposition}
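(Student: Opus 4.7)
The plan is to reduce the statement to Kronecker's theorem (Theorem~\ref{thm:kronecker}) via the logarithmic parametrisation of the torus. First I would dispatch the easy inclusion $\overline{S}\subseteq T(\boldsymbol{\lambda})$: for any $\boldsymbol{v}\in L(\boldsymbol{\lambda})$ we have $(\lambda_1^n)^{v_1}\cdots(\lambda_s^n)^{v_s} = (\lambda_1^{v_1}\cdots\lambda_s^{v_s})^n = 1$, so $S\subseteq T(\boldsymbol{\lambda})$, and $T(\boldsymbol{\lambda})$ is closed in $\mathbb{T}^s$ since it is cut out by finitely many continuous equations (taking a finite basis of $L(\boldsymbol{\lambda})$ suffices).

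For density, write $\lambda_j = e^{2\pi i \theta_j}$ and observe that $\boldsymbol{v}=(v_1,\ldots,v_s)\in L(\boldsymbol{\lambda})$ holds if and only if $v_1\theta_1+\cdots+v_s\theta_s\in\mathbb{Z}$. Now fix a target $(\mu_1,\ldots,\mu_s)\in T(\boldsymbol{\lambda})$ and write $\mu_j = e^{2\pi i \psi_j}$. The defining condition of $T(\boldsymbol{\lambda})$ says that for every $\boldsymbol{v}\in L(\boldsymbol{\lambda})$ one has $e^{2\pi i(v_1\psi_1+\cdots+v_s\psi_s)}=1$, i.e., $v_1\psi_1+\cdots+v_s\psi_s\in\mathbb{Z}$. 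In other words, every integer relation among the $\theta_j$ modulo $\mathbb{Z}$ is also satisfied by the $\psi_j$ modulo $\mathbb{Z}$, which is precisely the hypothesis of Kronecker's theorem.

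Given any $\varepsilon>0$, pick $\delta>0$ with $2\pi\delta<\varepsilon$ and apply Theorem~\ref{thm:kronecker} to obtain integers $p_1,\ldots,p_s$ and $n\in\mathbb{N}$ with $|n\theta_j-p_j-\psi_j|\leq\delta$ for each $j$. Then, using $|e^{2\pi i x}-e^{2\pi i y}|\leq 2\pi|x-y|$ and the fact that the exponential map is $\mathbb{Z}$-periodic, one obtains $|\lambda_j^n-\mu_j|\leq 2\pi\delta<\varepsilon$. Hence an arbitrary element of $T(\boldsymbol{\lambda})$ is approximated to within $\varepsilon$ (in the sup norm on $\mathbb{T}^s$) by a point of $S$, establishing density. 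There is no real obstacle here: the only conceptual point is recognising that multiplicative relations on $\mathbb{T}$ correspond exactly to additive relations of arguments modulo $\mathbb{Z}$, after which Kronecker's theorem delivers the conclusion directly.
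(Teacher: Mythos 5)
Your proof is correct and follows essentially the same route as the paper: parametrise $\mathbb{T}^s$ logarithmically, observe that multiplicative relations among the $\lambda_j$ correspond to additive relations among the $\theta_j$ modulo $\mathbb{Z}$ (and likewise for $\mu_j$ and $\psi_j$), verify Kronecker's hypotheses, and conclude via the Lipschitz estimate $|e^{2\pi i x}-e^{2\pi i y}|\leq 2\pi|x-y|$. The only (minor, and welcome) addition is your explicit check of the easy containment $S\subseteq T(\boldsymbol{\lambda})$ and of closedness of $T(\boldsymbol{\lambda})$, which the paper leaves implicit.
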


\begin{proof}
For $j=1,\ldots,s$, let $\theta_j \in \mathbb{R}$ be such that
$\lambda_j = e^{2\pi i \theta_j}$.  Notice that multiplicative
relations $\lambda_1^{v_1}\ldots \lambda_s^{v_s}=1$ are in one-to-one
correspondence with additive relations $\theta_1v_1+\ldots+\theta_sv_s
\in \mathbb{Z}$.  Let $(\mu_1,\ldots,\mu_s)$ be an arbitrary element
of $T(\boldsymbol{\lambda})$, with $\mu_j = e^{2\pi i\psi_j}$ for some
$\psi_j\in \mathbb{R}$.  Then the hypotheses of
Theorem~\ref{thm:kronecker} apply to $\theta_1,\ldots,\theta_s$ and
$\psi_1,\ldots,\psi_s$.  Thus given $\varepsilon>0$, there exist $n
\geq 0$ and $p_1,\ldots,p_s \in \mathbb{Z}$ such that 
$|n\theta_j-p_j-\psi_j| \leq \varepsilon$
for
$j=1,\ldots,s$.  Whence for $j=1,\ldots,s$,
\[ |\lambda^n_j - \mu_j | = |e^{2\pi i (n\theta_j-p_j)}- e^{2\pi i\psi_j}| \leq
|2\pi (n\theta_j - p_j -\psi_j)| \leq 2\pi\varepsilon \, . \] It follows that $(\mu_1,\ldots,\mu_s)$
lies in the closure of $S$. \qed
\end{proof}

\section{Algorithm for Ultimate Positivity}
Let $K$ be a number field of degree $d$ over $\mathbb{Q}$.  Recall
that there are $d$ distinct field monomorphisms
$\sigma_1,\ldots,\sigma_d : K \rightarrow \mathbb{C}$ (see,
e.g.,~\cite{FT93}).  Given a finite set $S$ of prime ideals in the
ring of integers $\mathcal{O}$ of $K$, we say that $\alpha \in
\mathcal{O}$ is an $S$-\defemph{unit} if the principal ideal
$(\alpha)$ is a product of prime ideals in $S$.  The following lower
bound on the magnitude of sums of $S$-units, whose key ingredient is
Schlickewei's ${p}$-adic generalisation of Schmidt's Subspace Theorem,
was established in~\cite{Evertse84,PS82} to analyse the growth of
LRS\@.
\begin{theorem}[Evertse, van der Poorten, Schlickewei]
  Let $m$ be a positive integer and $S$ a finite set of prime
  ideals in $\mathcal{O}$.  Then for every $\varepsilon > 0$ there
  exists a constant $C$, depending only on $m$, $K$, $S$, and
  $\varepsilon$ with the following property:  for any set of $S$-units
  $x_1,\ldots,x_m \in \mathcal{O}$ such that $\sum_{i \in I} x_i \neq
  0$ for all non-empty $I \subseteq \{1,\ldots,m\}$, it holds that
\begin{gather}
|x_1 + \ldots + x_m| \geq C X Y^{-\varepsilon} \, ,
\end{gather}
where $X=\displaystyle\max\{|x_i|:{1 \leq i \leq m}\}$,
$Y=\max \{ |\sigma_j(x_i)| :
1 \leq i \leq m,\, 1 \leq j \leq d \}$.
\label{thm:sunit}
\end{theorem}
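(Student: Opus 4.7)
The plan is to derive the theorem from Schlickewei's $p$-adic Subspace Theorem, using an induction on $m$ that exploits the hypothesis that all nonempty sub-sums $\sum_{i\in I} x_i$ are nonzero. Formally, I would argue by contradiction: fix $\varepsilon > 0$ and assume there is an infinite sequence of $S$-unit tuples $(x_1^{(n)},\ldots,x_m^{(n)})$, each with all sub-sums nonvanishing, such that
\[ |x_1^{(n)}+\cdots+x_m^{(n)}| < \tfrac{1}{n}\, X_n Y_n^{-\varepsilon}, \]
where $X_n$ and $Y_n$ are as in the statement. The goal is to produce a proper linear dependency $\sum c_i x_i^{(n)} = 0$ that holds for infinitely many $n$, which will contradict either the nonvanishing hypothesis directly or the induction hypothesis applied to a shorter sum.

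The first key step is to set up the places and linear forms. Let $M_K$ denote the set of places of $K$ and let $M_K(S) = M_\infty \cup S$, where $M_\infty$ is the set of archimedean places. At a distinguished archimedean place $v_0$ corresponding to $\sigma_1 = \mathrm{id}$, I use the $m$ linearly independent linear forms $L_{v_0,1} = X_1 + X_2 + \cdots + X_m$ and $L_{v_0,i} = X_i$ for $2 \leq i \leq m$; the first of these evaluates to the small quantity $|x_1+\cdots+x_m|$, while the others give us the $X$ in the statement. At every other place $v \in M_K(S)$ I take $L_{v,i} = X_i$ for all $i$. The heights of the solution vectors can then be controlled: since each $x_i$ is an $S$-unit, $\prod_{v \notin S} |x_i|_v = 1$, so the product formula reduces the absolute height of the vector $(x_1,\ldots,x_m)$ to a quantity bounded in terms of $Y$.

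The second step is to apply Schlickewei's theorem to this system. Computing the product $\prod_{v \in M_K(S)} \prod_{i=1}^m |L_{v,i}(\vec{x})|_v$ and using the $S$-unit property together with the product formula, the exceptional smallness of $|x_1+\cdots+x_m|$ at $v_0$ forces the product to fall below $H(\vec{x})^{-\varepsilon}$ (after possibly rescaling $\varepsilon$ by a factor depending on $m$, $K$, and $S$). The Subspace Theorem then places infinitely many of the vectors $(x_1^{(n)},\ldots,x_m^{(n)})$ into one fixed proper subspace of $K^m$, i.e. a single nontrivial relation $\sum_{i=1}^m c_i\, x_i^{(n)} = 0$ holds for infinitely many $n$.

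The third step is the inductive conclusion. Using this relation, express one $x_j^{(n)}$ as a linear combination of the others and substitute back into $x_1^{(n)} + \cdots + x_m^{(n)}$; after grouping, this rewrites the sum as a linear combination of at most $m-1$ of the $x_i^{(n)}$ with algebraic coefficients, which can be absorbed by enlarging $S$ and $K$ (boundedly in terms of the fixed data). The non-vanishing of every sub-sum $\sum_{i \in I} x_i^{(n)}$ guarantees that after this elimination no new sub-sum vanishes identically, so we may invoke the induction hypothesis for $m-1$ summands to contradict the assumed lower bound failure. The base case $m=1$ is trivial, since $|x_1| = X$ exceeds $CXY^{-\varepsilon}$ for any $C \leq 1$.

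The main obstacle I anticipate is the careful bookkeeping of the absolute multiplicative heights and the comparison between archimedean and $S$-adic sizes: one has to ensure that the single ``small'' factor $|x_1+\cdots+x_m|$ at $v_0$, together with the bounds $|x_i|_v \leq 1$ for $v \notin S$ and $|\sigma_j(x_i)| \leq Y$ at archimedean places, conspire to produce exactly a bound of shape $H(\vec{x})^{-\varepsilon'}$ as required by the Subspace Theorem. Getting the exponents to match, and ensuring that enlarging $S$ during the inductive step still leaves the resulting constants depending only on $m$, $K$, $S$, and $\varepsilon$, are the delicate points; the clean statement of Schlickewei's theorem, together with Dirichlet's $S$-unit theorem to control the height of $S$-units, is what makes these manipulations tractable.
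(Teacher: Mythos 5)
The paper does not prove Theorem~\ref{thm:sunit}; it is cited directly from \cite{Evertse84,PS82}, so there is no paper proof to compare against. Your sketch correctly follows the strategy of those sources: apply Schlickewei's $p$-adic Subspace Theorem to the system of linear forms in which the sum $X_1+\cdots+X_m$ replaces one coordinate form at a distinguished archimedean place, use the $S$-unit condition and the product formula to relate the height of the solution vector to $Y$, conclude that an infinite family of would-be counterexamples lies in a single fixed proper subspace, and then reduce to a sum of fewer $S$-units.

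There is, however, a genuine gap in the inductive step. You assert that ``the non-vanishing of every sub-sum $\sum_{i\in I} x_i^{(n)}$ guarantees that after this elimination no new sub-sum vanishes identically,'' but this does not follow and is false in general. The Subspace Theorem delivers a relation $\sum_i c_i x_i^{(n)} = 0$ with arbitrary fixed coefficients $c_i\in K$, not characteristic vectors of subsets, so eliminating, say, $x_m^{(n)}$ rewrites $x_1^{(n)}+\cdots+x_m^{(n)}$ as $\sum_{i<m}\bigl(1 - c_i/c_m\bigr)\,x_i^{(n)}$. The sub-sums of this new expression are $\sum_{i\in I}(1-c_i/c_m)\,x_i^{(n)}$, which are not among the original sub-sums $\sum_{i\in I} x_i^{(n)}$, and nothing in the hypothesis prevents them from vanishing along an infinite subsequence. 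Likewise, a relation with arbitrary $c_i$ never ``directly contradicts the nonvanishing hypothesis'' as you suggest. The standard repair is to iterate: if a weighted sub-sum of the collapsed expression vanishes for infinitely many $n$, that is a further fixed linear relation, so pass to a subsequence and eliminate another variable; each elimination strictly decreases the number of terms, so the process terminates at a representation with no vanishing sub-sums (or at a single term, which is absurd). Equivalently, one must strengthen the induction to sums $\sum a_i x_i$ with arbitrary fixed nonzero coefficients and a hypothesis of non-vanishing \emph{weighted} sub-sums. As written, your induction does not close without this.
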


We first consider how to decide Ultimate Positivity in the case of a
non-degenerate simple LRS $\uu$.  As explained in
Section~\ref{sec:background}, we can assume without loss of generality that
$\uu$ has a positive real dominant root.  Furthermore, by considering
the LRS $\langle k^{n+1} u_n \rangle_{n=0}^\infty$ for a suitable integer
$k\geq 1$, we may assume that the characteristic roots and
coefficients in the closed-form solution (\ref{eq:exp}) are all
algebraic integers.  

Suppose that $\uu$ has dominant characteristic roots
$\rho,\gamma_1,\overline{\gamma_1},\ldots,\gamma_s,\overline{\gamma_s}$,
where $\rho$ is real and positive.  Then we can write $\uu$ in the
form
\begin{equation}
\label{eq-explanation}
u_n = b\rho^n + c_1\gamma_1^n + \overline{c_1} \overline{\gamma_1}^n + \ldots
+ c_s\gamma_s^n + \overline{c_s} \overline{\gamma_s}^n + r(n) \, ,
\end{equation}
where $r(n) = o(\rho^{n(1-\varepsilon)})$ for some $\varepsilon>0$.
Now let $\lambda_i = \gamma_i/\rho$ for $i=1,\ldots,s$.  Then we can write 
\begin{equation}
\label{eq-function}
u_n = \rho^n f(\lambda_1^n,\ldots,\lambda_s^n) + r(n) \, ,
\end{equation}
where $f : \mathbb{T}^s \rightarrow \mathbb{R}$ is defined by
$f(z_1,\ldots,z_s) = b+c_1z_1+\overline{c_1}\overline{z_1} + \ldots
+ c_s z_s + \overline{c_s} \overline{z_s}$.

\begin{proposition}
The LRS $\langle u_n \rangle_{n=0}^{\infty}$ is ultimately positive if
and only if $f(\boldsymbol{z}) \geq 0$ for all ${\boldsymbol{z} \in
  T(\boldsymbol{\lambda})}$.
\label{prop:non-deg-case}
\end{proposition}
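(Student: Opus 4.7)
The plan is to prove the two implications separately: the $(\Rightarrow)$ direction is topological and uses density of the orbit, while the $(\Leftarrow)$ direction is the substantive one, relying on Theorem~\ref{thm:sunit} together with the Skolem--Mahler--Lech Theorem (Theorem~\ref{thm:SML}).

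For $(\Rightarrow)$, I would argue by contrapositive. Suppose $f(\boldsymbol{z}_0)<0$ for some $\boldsymbol{z}_0 \in T(\boldsymbol{\lambda})$. By Proposition~\ref{prop:density} the orbit $\{(\lambda_1^n,\ldots,\lambda_s^n):n\in\mathbb{N}\}$ is dense in $T(\boldsymbol{\lambda})$; combined with continuity of $f$, infinitely many $n$ satisfy $f(\lambda_1^n,\ldots,\lambda_s^n)\leq\frac{1}{2}f(\boldsymbol{z}_0) < 0$. Since $r(n) = o(\rho^n)$, the leading term $\rho^n f(\lambda_1^n,\ldots,\lambda_s^n)$ in~(\ref{eq-function}) dominates for such $n$ sufficiently large, forcing $u_n < 0$. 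Thus $\uu$ is not ultimately positive.

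For $(\Leftarrow)$, assume $f\geq 0$ on $T(\boldsymbol{\lambda})$. Since the orbit lies in $T(\boldsymbol{\lambda})$, we get $v_n := \rho^n f(\lambda_1^n,\ldots,\lambda_s^n) \geq 0$ for all $n$. To show $u_n = v_n + r(n) \geq 0$ eventually, I apply Theorem~\ref{thm:sunit} to the $m = 2s+1$ summands $x_1(n) = b\rho^n$, $x_{2j}(n) = c_j\gamma_j^n$, $x_{2j+1}(n) = \overline{c_j}\overline{\gamma_j}^n$, each of which is an $S$-unit for the finite set $S$ of prime ideals dividing any coefficient or characteristic root (all assumed to be algebraic integers). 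The no-vanishing-subsum hypothesis is verified using non-degeneracy: for each non-empty $I$, the function $n\mapsto\sum_{i\in I}x_i(n)$ is itself a non-degenerate simple LRS with non-zero coefficients and thus, by Theorem~\ref{thm:SML}, has only finitely many zeros; taking a union over the $2^m-1$ subsets produces a finite exceptional set $F$. For $n \notin F$, Theorem~\ref{thm:sunit} gives $v_n \geq C X Y^{-\varepsilon}$ where $X = (\max_i |c_i|)\rho^n$ and $Y \leq C_1 R^n$ with $R = \max_{i,j} |\sigma_j(\mu_i)|$ a constant independent of $n$.

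Hence $v_n \geq C' \rho^n R^{-n\varepsilon}$ for $n \notin F$, whereas $|r(n)| \leq C'' \rho^n \tau^n$ for some $\tau < 1$, since every non-dominant characteristic root has modulus strictly less than $\rho$. Choosing $\varepsilon$ small enough that $R^{-\varepsilon} > \tau$, we obtain $u_n > 0$ for all sufficiently large $n \notin F$; as $F$ is finite, $\uu$ is ultimately positive. The main obstacle lies in the backward direction: a priori $v_n$ could come arbitrarily close to $0$, and it is precisely the depth of Theorem~\ref{thm:sunit} (via Schlickewei's $p$-adic Subspace theorem) that allows one to show $v_n$ stays above $\rho^n$ up to the subexponential loss $R^{-n\varepsilon}$, which just barely beats the exponential decay of $r(n)/\rho^n$.
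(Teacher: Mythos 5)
Your proof is correct and takes essentially the same approach as the paper's: the paper likewise combines the $S$-unit lower bound (Theorem~\ref{thm:sunit}, with the non-vanishing-subsum hypothesis verified via non-degeneracy and the Skolem--Mahler--Lech Theorem) with density of the orbit in $T(\boldsymbol{\lambda})$ from Proposition~\ref{prop:density}. You merely present the two implications separately, observing that the $(\Rightarrow)$ direction needs only the crude estimate $r(n)=o(\rho^n)$ rather than the full $S$-unit bound, which is a small but accurate clarification of where the Diophantine input is actually needed.
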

\begin{proof}
Consider the expression (\ref{eq-explanation}).  Let $K$ be the number
field generated over $\mathbb{Q}$ by the characteristic roots of $\uu$
and let $S$ be the set of prime ideal divisors of the dominant
characteristic roots
$\rho,\gamma_1,\overline{\gamma_1},\ldots,\gamma_s,\overline{\gamma_s}$
and the associated coefficients
$b,c_1,\overline{c_1},\ldots,c_s,\overline{c_s}$.  (These coefficients
lie in $K$ by straightforward linear algebra.)  
Then the term
\begin{gather}
b\rho^n + c_1\gamma_1^n + \overline{c_1} \overline{\gamma_1}^n +
\ldots + c_s\gamma_s^n + \overline{c_s} \overline{\gamma_s}^n
\label{eq:S-units}
\end{gather}
is a sum of $S$-units.  

Applying Theorem~\ref{thm:sunit} to the sum of $S$-units in
(\ref{eq:S-units}), we have $X=C_1\rho^n$ for some constant $C_1>0$ and
$Y = C_2\rho^n$ for some constant $C_2>0$ (since an embedding of $K$
into $\mathbb{C}$ maps characteristic roots to characteristic roots).
The theorem tells us that for each $\varepsilon>0$ there is a
constant $C>0$ such that
\[ |b\rho^n + c_1\gamma_1^n + \overline{c_1} \overline{\gamma_1}^n + \ldots
+ c_s\gamma_s^n + \overline{c_s} \overline{\gamma_s}^n| \geq
C\rho^{n(1-\varepsilon)} \] for all but finitely many values of $n$.
(Since $\uu$ is non-degenerate, it follows from the Skolem-Mahler-Lech
Theorem that each non-empty sub-sum of the left-hand side vanishes for
finitely many $n$.)

Now choose $\varepsilon>0$ such that $r(n)=o(\rho^{n(1-\varepsilon)})$
in (\ref{eq-explanation}).  Then for all sufficiently large $n$, $u_n
\geq 0$ if and only if $b\rho^n + c_1\gamma_1^n + \overline{c_1}
\overline{\gamma_1}^n + \ldots + c_s\gamma_s^n + \overline{c_s}
\overline{\gamma_s}^n > 0$.  Equivalently, looking at
(\ref{eq-function}), for all sufficiently large $n$ we have
$u_n \geq 0$ if and only if $ f(\lambda_1^n,\ldots,\lambda_s^n) \geq
0$.  But the orbit $\{ (\lambda_1^n,\ldots,\lambda_s^n) : n \in
\mathbb{N}\}$ is a dense subset of $T(\boldsymbol{\lambda})$ by
Proposition~\ref{prop:density}.  Thus $u_n$ is ultimately positive if
and only if $f(\boldsymbol{z}) \geq 0$ for all ${\boldsymbol{z} \in
  T(\boldsymbol{\lambda})}$.  \qed
\end{proof}

We can now state and prove our main result.
\begin{theorem}
The Ultimate Positivity Problem for simple LRS is decidable in
polynomial space in general, and in polynomial time for LRS of fixed order.
\end{theorem}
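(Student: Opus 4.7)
The plan is to run the algorithm of Figure~\ref{fig:algorithm} and account for the complexity of each step using the tools assembled in Sections~\ref{sec:number-theory}--\ref{sec:fo-reals} together with Propositions~\ref{prop:non-degenerate} and~\ref{prop:non-deg-case}. The overall strategy is to reduce from an arbitrary simple LRS to finitely many non-degenerate simple subsequences, reduce ultimate positivity of each to a universal first-order sentence over $\mathbb{R}$ whose size (and number of variables) is controlled, and finally invoke the decision procedures of Canny and Renegar.

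For Step~1, the characteristic roots of $\uu$ are computable in polynomial time (Section~\ref{sec:LRS}), and by Proposition~\ref{prop:non-degenerate} the constant $M$ satisfies $\size{M}$ polynomial in $\size{\uu}$ in general and $M$ absolutely bounded when the order $k$ is fixed. For each $0 \leq l < M$, the subsequence $v_n = u_{Mn+l}$ is a non-degenerate simple LRS whose characteristic roots are the $M$-th powers of those of $\uu$. If $\vv \not\equiv 0$ has no positive real dominant root, then Section~\ref{sec:LRS} rules out ultimate positivity directly; otherwise, I invoke Proposition~\ref{prop:non-deg-case} to reduce ultimate positivity of $\vv$ to the assertion that $f(\boldsymbol{z}) \geq 0$ for all $\boldsymbol{z} \in T(\boldsymbol{\lambda}^M)$.

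The remaining task is to exhibit this assertion as a short universal sentence over $\mathbb{R}$. The coefficients $b_i, c_j$ and the ratios $\lambda_i = \gamma_i/\rho_1$ are first-order definable from the characteristic roots and initial values of $\uu$; in particular, raising to the $M$-th power can be performed by iterated squaring in $O(\log M)$ steps, giving a quantifier-free description of the closed-form data of $\vv$ of polynomial size. By Corollaries~\ref{corl:basis} and~\ref{corl:scale}, a basis of $L(\boldsymbol{\lambda}^M)$, and hence a quantifier-free defining formula for $T(\boldsymbol{\lambda}^M) \subseteq \mathbb{T}^s$ of polynomial bit-length, is obtainable in polynomial space (polynomial time when $s$ and the order are fixed).

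Splitting each $z_j$ into its real and imaginary parts, the condition ``$f \geq 0$ on $T(\boldsymbol{\lambda}^M)$'' becomes a universal sentence $\varphi$ over $\mathbb{R}$ of length polynomial in $\size{\uu}$ and with $O(s)$ variables; Canny's theorem then decides $\varphi$ in polynomial space, and Renegar's theorem decides it in polynomial time when the order is fixed. Since there are at most $M$ subsequences to check and $\size{M}$ is polynomial in $\size{\uu}$, this yields the claimed bounds. The main obstacle is ensuring that every intermediate object (the modulus $M$, the basis of $L(\boldsymbol{\lambda}^M)$, the defining formula for $T(\boldsymbol{\lambda}^M)$, and the symbolic expressions for $b_i, c_j$) admits an encoding of polynomial size; this is precisely what Proposition~\ref{prop:non-degenerate}, Masser's bound via Corollary~\ref{corl:basis}, and the iterated-squaring trick together provide, thereby keeping the final call to the real-arithmetic decision procedures within the required complexity class.
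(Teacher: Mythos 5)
Your proposal follows essentially the same route as the paper: reduce to non-degenerate subsequences via Proposition~\ref{prop:non-degenerate}, apply Proposition~\ref{prop:non-deg-case} to each, and encode the resulting condition as a universal sentence over $\mathbb{R}$ whose size (and, in the fixed-order case, number of variables) is controlled by Masser's bound (via Corollaries~\ref{corl:basis} and~\ref{corl:scale}) and the iterated-squaring encoding of the $M$-th powers, then invoke Canny and Renegar. The argument is correct and matches the paper's proof in structure and in every essential ingredient; the only place you could be a touch more explicit is in the final accounting, where the polynomial-space bound for arbitrary order rests on space reuse across the (exponentially many) values of $l$ while the loop counter has polynomial bit-size, rather than on the number $M$ of subsequences itself being small.
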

\begin{proof}
A decision procedure is given in the table below.
Correctness follows from the fact that $\uu$ is ultimately positive if and
only if each of the non-degenerate subsequences $\vv$ considered in
Step 2 is ultimately positive.  But ultimate positivity of these
subsequences is determined in Step 2.4 using
Proposition~\ref{prop:non-deg-case}.
It remains to account for the complexity of each step.

As noted in Section~\ref{sec:background}, Step 1 requires time
polynomial in $\size{\uu}$.

For LRS of fixed order, there is an absolute bound on $M$ in Step 2,
while for LRS of arbitrary order, $M$ is exponentially bounded in
$\size{\uu}$ by Proposition~\ref{prop:non-degenerate}.  We show that for
each subsequence $\vv$, Steps 2.1--2.4 require polynomial time
for fixed-order LRS and polynomial space in general.

Using iterated squaring, the coefficients $b_i$ and $c_j$ in the
expression (\ref{eq:closed2}) for $\vv$ are definable in terms of the
characteristic roots of $\uu$ and the corresponding coefficients in
the closed-form expression for $\uu$ by a polynomial-size first-order
formula that uses only universal quantifiers.  This accomplishes Step
2.1.

Combining Corollaries~\ref{corl:basis} and~\ref{corl:scale}, Step 2.3
can be done in polynomial space for arbitrary LRS and polynomial time
for LRS of fixed order.

Step 2.4 uses a decision procedure for universal sentences over the
reals, having already noted that the coefficients $b_i$ and $c_j$ are
first-order definable.  By the results described in
Section~\ref{sec:background} this can be done in polynomial space for
arbitrary LRS and polynomial time for LRS of fixed order.\qed
\end{proof}
\begin{table}[h]
\begin{tabular}{l}
\textbf{Decision procedure for ultimate positivity of a simple LRS $\uu$} \\
\hline
\multicolumn{1}{c}{\parbox{0.95\textwidth}{
\begin{itemize}
\item[1.] Compute the characteristic roots
  $\{\rho_1,\ldots,\rho_\ell,\gamma_1,\overline{\gamma_1},\ldots,
  \gamma_m,\overline{\gamma_m}\}$ of $\uu$.\\  Writing $\alpha\sim\beta$
  if $\alpha/\beta$ is a root of unity,
let $M = \mathrm{lcm} \{ \mathrm{ord}(\alpha/\beta) :
                         \mbox{$\alpha \sim \beta$ are characteristic roots}\}$.
Moreover let $\{ \rho_i : i \in I\} \cup \{ \gamma_j,\overline{\gamma_j} : j \in J\}$ contain a unique representative from each equivalence class.

\item[2.] For $l=0,\ldots,M-1$, check ultimate positivity of the
  non-degenerate subsequence $v_n =
  u_{Mn+l}$ as follows:
\begin{itemize}
\item[2.1.]
Compute the coefficients $b_i$ and $c_j$ in the closed-form solution
\begin{equation}
 v_n = \sum_{i\in I} b_i \rho_i^{Mn} + 
      \sum_{j \in J} \left(c_j \gamma_j^{Mn} + 
                  \overline{c_j} \overline{\gamma_j}^{Mn}\right) \, . 
\label{eq:closed2}
\end{equation}
\item[2.2.] If $\vv \not\equiv 0$ and there is no dominant real
  characteristic root in (\ref{eq:closed2}) then $\vv$ is not
  ultimately positive.
\item[2.3.] Let
  $\rho_1,\gamma_1,\overline{\gamma_1},\ldots,\gamma_s,\overline{\gamma_s}$
  be dominant among the characteristic roots appearing in
  (\ref{eq:closed2}).  Define
  $\lambda_1=\gamma_1/\rho_1,\ldots,\lambda_s=\gamma_s/\rho_1$ and
  compute a basis of $L(\lambda_1^M,\ldots,\lambda_s^M)$.
\item[2.4.] Define $f:\mathbb{T}^s\rightarrow \mathbb{R}$ by
$f(z_1,\ldots,z_s) = b_1+c_1z_1+\overline{c_1}
\overline{z_1} + \ldots + c_sz_s +
\overline{c_s} \overline{z_s}$.\\ 
Then $\vv$ is ultimately positive if and only if 
$f(\boldsymbol{z}) \geq 0$ for all
$\boldsymbol{z} \in T(\boldsymbol{\lambda}^M)$.
\end{itemize}
\end{itemize}
}}\\
\hline
\end{tabular}
\end{table}

We note that a related proof strategy (passing from a finitely
generated group to its closure and appealing to the theory of the
reals) was used in~\cite{BlondelJKP05} in the context of threshold
problems for quantum automata.

\section{Complexity Lower Bound}

In this section we give reductions of the decision problem for
universal sentences over the field of real numbers to the Positivity
and Ultimate Positivity Problems respectively.  The former problem is
easily seen to be \textbf{coNP}-hard and, through the work of
Canny~\cite{Canny88}, is known to be in \textbf{PSPACE}.  Typically
this \textbf{PSPACE} upper bound is stated for the complement problem:
the decision problem for existential sentences over the field of
reals.

It is known that the problem 4-FEAS of whether a degree-4 polynomial
has a real root is polynomial-time equivalent to the decision problem
for the existential theory of the reals~\cite{BSS}.  Here we consider
a related problem, 4-POS, which asks whether a degree-4 polynomial
$f(x_1,\ldots,x_n)$ with rational coefficients satisfies
$f(\boldsymbol{x})\geq 0$ for all $\boldsymbol{x} \in [0,1]^n$.  Using
the above-mentioned result on 4-FEAS in tandem with bounds on
magnitude of definable numbers in the existential theory of the reals
(see~\cite{OuaknineW13b} for details) we can show:

\begin{theorem}
There is a polynomial-time reduction of the decision problem for the
universal theory of the reals to the problem 4-POS\@.
\label{thm:reduce}
\end{theorem}

We now reduce 4-POS to the Positivity and Ultimate Positivity
Problems.  The first step of the reduction is to compute a collection
of $s$ multiplicatively independent algebraic numbers of absolute
value $1$.

By a classical result of Lagrange, a prime number is congruent to $1$
modulo $4$ if and only if it can be written as the sum of two
squares~\cite{FT93}.  By Theorem~\ref{thm:dirichlet}, the class of
such primes has asymptotic density $1/2$ in the set of all primes, and
therefore, by the Prime Number Theorem, asymptotic density ${1}/(2\log
n)$ in the set of natural numbers.  It follows that one can compute
the first $s$ such primes $p_1,\ldots,p_s$ and their decomposition as
sums of squares in time polynomial in $s$.  Writing $p_j = a_j^2 +
b_j^2$, where $a_j,b_j \in \mathbb{Z}$, define $\lambda_j =
\frac{a_j+ib_j}{a_j-ib_j}$ for $j=1,\ldots,s$.  Then each $\lambda_j$
  is an algebraic number of degree $2$ and absolute value $1$.

\begin{proposition}
$\lambda_1,\ldots,\lambda_s$ are multiplicatively independent.
\label{prop:independent}
\end{proposition}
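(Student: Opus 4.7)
The plan is to exploit unique factorization of (fractional) ideals in the Gaussian integers $\mathbb{Z}[i]$. First I would rewrite each $\lambda_j$ in a more convenient form: using $p_j = (a_j+ib_j)(a_j-ib_j)$, one obtains
\[ \lambda_j \;=\; \frac{(a_j+ib_j)^2}{p_j} \;=\; \frac{a_j+ib_j}{a_j-ib_j} \;=\; \frac{\pi_j}{\overline{\pi_j}}, \]
where $\pi_j \defequals a_j+ib_j \in \mathbb{Z}[i]$.

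Next I would recall the classical splitting behaviour in $\mathbb{Z}[i]$. Since each $p_j$ is a rational prime congruent to $1 \pmod 4$, it splits as $(p_j) = (\pi_j)(\overline{\pi_j})$ into two distinct prime ideals; in particular $\pi_j$ and $\overline{\pi_j}$ are non-associate Gaussian primes (the only ramified rational prime in $\mathbb{Z}[i]$ is $2$, which is excluded). Moreover, for $j\neq k$ the prime ideals $(\pi_j),(\overline{\pi_j})$ lie above $p_j$ while $(\pi_k),(\overline{\pi_k})$ lie above $p_k$, and since the $p_j$ are pairwise distinct rational primes, the $2s$ ideals $(\pi_1),(\overline{\pi_1}),\ldots,(\pi_s),(\overline{\pi_s})$ are pairwise distinct.

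Now suppose $\lambda_1^{v_1}\cdots\lambda_s^{v_s}=1$ for some $v_1,\ldots,v_s\in\mathbb{Z}$. Passing to the associated fractional ideal in $\mathbb{Z}[i]$ yields
\[ \prod_{j=1}^{s} (\pi_j)^{v_j}\,(\overline{\pi_j})^{-v_j} \;=\; \mathcal{O}. \]
By unique factorization of fractional ideals and the distinctness of the $2s$ prime ideals involved, every exponent must vanish, i.e.\ $v_j=0$ for all $j$. Hence $\lambda_1,\ldots,\lambda_s$ are multiplicatively independent.

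The only delicate point is verifying that $(\pi_j)\neq(\overline{\pi_j})$, i.e.\ that $\pi_j$ and $\overline{\pi_j}$ are non-associate in $\mathbb{Z}[i]$; this is where one uses $p_j\neq 2$ (guaranteed by $p_j\equiv 1\pmod 4$), since associateness $\overline{\pi_j}=u\pi_j$ with $u\in\{\pm 1,\pm i\}$ forces either $a_j=0$, $b_j=0$, or $|a_j|=|b_j|$, each of which contradicts primality of $p_j=a_j^2+b_j^2$ unless $p_j=2$. Once this is in hand, the rest is a direct application of unique factorization.
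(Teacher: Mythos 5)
Your proof is correct and takes essentially the same route as the paper: both arguments reduce multiplicative independence to unique factorisation in the Gaussian integers $\mathbb{Z}[i]$, using the splitting $p_j = (a_j+ib_j)(a_j-ib_j)$ into conjugate Gaussian primes. The only cosmetic difference is that you pass to fractional ideals (which handles negative exponents and units automatically) and explicitly verify that $\pi_j$ and $\overline{\pi_j}$ are non-associate, whereas the paper clears denominators and compares element-level factorisations, leaving the non-associateness implicit; the underlying idea is identical.
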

\begin{proof}
Recall that the ring of Gaussian integers $\mathbb{Z}(i)$ is a unique
factorisation domain and that $a+ib \in \mathbb{Z}(i)$ is prime iff
$a^2+b^2$ is a rational prime~\cite{FT93}.  Now $\lambda_1^{n_1}\ldots
\lambda_s^{n_s}=1$ if and only if
\begin{equation*}
(a_1+ib_1)^{n_1}\ldots (a_s+ib_s)^{n_s} = 
(a_1-ib_1)^{n_1}\ldots (a_s-ib_s)^{n_s} 
\end{equation*}
But each factor $a_j+ib_j$ and $a_j-ib_j$ is prime by construction.
Thus by unique factorisation we must have $n_1=0,\ldots,n_s=0$.\qed
\end{proof} 

\begin{theorem}
There are polynomial-time reductions from 4-POS to the Positivity and
Ultimate Positivity Problems for LRS\@.
\end{theorem}
\begin{proof}
Suppose we are given an instance of 4-POS, consisting of a
polynomial $f(x_1,\ldots,x_s)$.  Let $\lambda_1,\ldots,\lambda_s$ be
multiplicatively independent algebraic numbers, constructed as in
Proposition~\ref{prop:independent}.  For $j=1,\ldots,s$, the sequence
$\langle y_{j,n} : n \in \mathbb{N} \rangle$ defined by $y_{j,n}
= \frac{1}{2}(\lambda^n_j+\overline{\lambda_j}^n)$ satisfies a
second-order linear recurrence $y_{j,n+2} = ({2a_j}/{p_j})y_{j,n+1} -
y_{j,n}$ with rational coefficients.

Recall, moreover, that given two simple LRS of respective orders $l$
and $m$, their sum is a simple LRS of order at most $l+m$, their product is a
simple LRS of order at most $lm$, and representations of both can be computed
in polynomial time in the size of the input LRS\@.  Thus the sequence
$\boldsymbol{u}=\langle u_n : n \in \mathbb{N} \rangle$ given by $u_n
= f(y^2_{1,n},\ldots,y_{s,n}^2)$ is a simple LRS over the rationals.
Since $f$ has degree at most $4$, the order of $\boldsymbol{u}$ is at
most $4^{4}$ times the number of monomials in $f$ and the recurrence
satisfied by $\boldsymbol{u}$ can be computed in time polynomial in
$\size{f}$.  (Observe that if the degree of $f$ were not fixed, then
the above reasoning would yield an upper bound on the order of
$\boldsymbol{u}$ that is exponential in the degree of $f$.)

From Propositions~\ref{prop:density} and~\ref{prop:independent} it
follows that the orbit $\{ (\lambda_1^n,\ldots,\lambda_s^n) : n \in
\mathbb{N} \}$ is dense in the torus $\mathbb{T}^s$.  Thus the set
$\{(y^2_{1,n},\ldots,y_{s,n}^2) : n \in \mathbb{N}\}$ is dense in
$[0,1]^s$ and $f$ assumes a strictly negative value on $[0,1]^s$ if
and only if $u_n<0$ for some (equivalently infinitely many) $n$.  This
completes the reduction.\qed
\end{proof}

\section{Conclusion}

We have shown that the Ultimate Positivity Problem for simple LRS is
decidable in polynomial space and as hard as the decision problem for
universal sentences over the field of real numbers.  A more careful
accounting of the complexity of our decision procedure places it in
$\mathbf{coNP}$ with an oracle for the universal theory of the reals.
Thus a \textbf{PSPACE}-hardness result for Ultimate Positivity would
have non-trivial consequences for the complexity of decision problems
for first-order logic over the reals.  On the other hand, the obstacle
to improving the polynomial-space upper bound is the complexity of
computing a basis of the group of multiplicative relations among the
characteristic roots of the recurrence.


\bibliographystyle{abbrv}
\bibliography{ultimate}
\end{document}